\newtheoremstyle{NoItalic}{}{}{}{}{\bfseries}{.}{.5em}{\thmname{#1}\thmnumber{ #2}\thmnote{ #3}}\theoremstyle{NoItalic}
\newtheorem{pro}{Proposition}[section]
    \let\c@pro=\c@figure %adding counters to each other
    \xpatchcmd{\proof}{\@addpunct{.}}{\@addpunct{:}}{}{} %change . to : in proof
\begin{document} 

\title{A note on bequest preferences in utility maximisation for modern tontines}
\author{Thomas Bernhardt\footnote{University of Manchester, {\texttt{thomas.bernhardt@manchester.ac.uk}}}} %change if necessary
\date{}
\maketitle
\vspace{-4em}

\section*{\begin{center}Abstract\end{center}}
\begin{abstract}
\noindent In this short note, we address two issues in the literature about modern tontines with bequest and utility maximisation: how to verify optimal controls and the decreasing allocation of funds in the tontine. We want to raise awareness in the actuarial community about the dual approach to solve optimal control problems when working with power utilities. Additionally, we point out that bequest preferences should be time-dependent or otherwise yield unrealistic investment strategies. We base our attempt at modelling bequest preferences on common sense rules like 100\% payback upon death at the start that vanishes over time. Our modelling shows that the resulting investment strategy almost linearly adjusts the allocation in the tontine from 0\% to 100\% over time.
\\[1em]
\textbf{Keywords:} Tontine; bequest; utility maximisation; dual approach; linear adjustment.
\\[0.5em]
\textbf{JEL codes:} C61, G23
\end{abstract}

%---------------------------
\section{Introduction}\label{section:Intro}
%---------------------------

Tontines are cost-efficient and sustainable alternatives to guaranteed pensions and annuities. They have no risk margin and do not rely on the accuracy of initial best estimates in theory. They mitigate any risk of providing a lifelong income by adjusting the income in line with experienced outcomes. They transfer systematic risks directly to the income while mitigating unsystematic risks by redistributing funds. Tontines are promising candidates to organise retirement in an ageing society with DC pension pots.

The lack of guarantees allowing tontines to adjust members' income gives them freedom in their design. For example, tontines could invest freely in the financial market in contrast to pensions and annuities, which are heavily regulated. The design flexibility extends to issues unrelated to tontines but are considered issues with pensions and annuities like bequests.

In this paper, we model bequest preferences for a tontine that invests in a financial market. We point out why the previous modelling appears inadequate and solve a previously observed technical problem that helps with the preferences.

%---------------------------
Papers modelling tontines with bequest include \citet{BernhardtDonnelly2019}, \citet{Dagpunar2021} and \citet{ChenRach2022}. Even though the settings differ, all authors report a low fund allocation towards bequest for reasonable parameters. \citet{BernhardtDonnelly2019} pay a fixed proportion of remaining funds as bequest; that proportion ranges from 0 to 20\%. \citet{Dagpunar2021} pays a variable proportion of remaining funds as bequest; that ranges initially from 0 to 20\%. \citet{ChenRach2022} split the funds between a tontine and a life insurance product; 5.5\% go to life insurance. Additionally, if investors can directly control allocation or payout, they would increase the allocation towards or payout from bequest over time for reasonable parameters. To be clear, \citet{ChenRach2022} mention that the bequest payout can be constant, decreasing or increasing over time depending on the interplay between interest rate and subjective discount rate. However, the subjective discount rate is not a free parameter when we model utility from discounted wealth, which sets the subjective discount rate so that the bequest payout will increase in that setting. One can say that the literature suggests that investors initially allocate a small fraction to bequeath but would increase that fraction or payout over time.

The findings from the literature appear counterintuitive because we would expect that modelling a bequest motive leads to a significant allocation of funds to it. \citet{DavidoffBrownDiamond2005} already suggested that something is wrong with our modelling in the context of the annuity puzzle. Here, only a small proportion of funds is set aside as bequest while the maturity of funds is annuitised unless we assume extreme situations similar to before. The underlying problem is dying at an early age is unlikelier than dying at an advanced age. Thus, maximising income and bequest favours bequest at an advanced age when dying is likelier than at an early age when it is unlikely.

A bequest motive should model the fear of losing savings upon an unexpected death rather than maximising its output. We think of countering unexpected events rather than capitalising on the most likely events. For example, it is unacceptable to invest in a pension and lose all savings in the next moment because of sudden death. But it is acceptable to live off savings and not bequeath when living longer than expected. In particular, bequest preferences should be strong at the start and weaken over time in retirement, yielding a large proportion of remaining funds in the beginning and a decreasing proportion over time, i.e.\ the exact opposite of the findings in the recent literature. 

It is worth mentioning the Riccati tontine by \citet{MilevskySalisbury2024}. Members who pass away during the investment period receive their initial investment (but not the shared investment returns) on average back. The proportion (called surrender charge schedule) from total investments paid to the deceased has the properties we want, decreasing over time with a payback guarantee at the start. However, the properties are an artefact of a fixed average payout and above-inflation returns from the financial market. For example, there would be no decrease or sharing without the financial market.

%-----------------------------
We want to find preferences that yield large bequests at the start that decrease over time. Previous research modelled the balance between income and bequest to heirs with a fixed parameter. If we want a strong bequest motive at the start that weakens over time, we need time-dependent preferences. However, establishing the optimal allocation and investment strategies is a problem in \citet[p.264]{Dagpunar2021} with fixed preferences.

Candidate controls follow from an HJB approach, but a verification step is needed to show their optimality. A PDE that describes the problem locally implies the candidate controls. Because of the random nature of investments, the definition of global optimality uses an expectation. The problem is that connecting the PDE to the objective yields a limit of expectations with potentially irregular asymptotic. It is well-known that Fatou's lemma can deal with that limit when maximising a non-negative objective. However, Fatou's lemma does not apply to maximising objectives with negative numbers like power utility with a negative exponent, i.e.\ the case of risk-averse investors.

The case of risk-averse investors was a thorny issue even in Merton's portfolio problem until \citet{Pliska1986} with his dual approach. In a nutshell, Fatou's lemma does not apply to the original problem but a related one. However, it relies on a concave utility function. \citet{Biagini2010} gives an abstract overview of the approach without a concrete example. \citet{Rogers2013} discusses many generalised versions of the original Merton problem where the method solved the problem; see Chapter 1.2 in particular. See \citet{HerdegenHobsonJerome2021} for a recent take on the problem and alternative approaches. It is worth noting most authors omit Fatou's lemma and choose to state that a positive local martingale is a supermartingale. We use the dual approach to verify the solution by \citet[p.264]{Dagpunar2021} and generalise it to the case of time-dependent preferences.

%------------------------------- 
We use the setting of \citet{Dagpunar2021} but with a time-dependent deterministic function that models the relative importance between income and potential bequest payout in the rest of the paper. More precisely, we look at an idealised tontine with infinitely many members and known mortality distribution, i.e.\ there are no discretisation issues with the redistribution of funds. The investors choose the proportion of funds used for redistribution upon death and, at the same time, determine how much they receive from the tontine in return when alive. The tontine pays the remaining funds blocked from redistribution as a bequest upon the investors' death. It is worth noting that the investors' bequest grows as income does from redistribution. The investors control their investment in a Black-Scholes market and income from their remaining funds. For mathematical tractability, we assume that investors seek to maximise the lifetime utility of their consumption and bequest using constant relative risk aversion.  

In section 2, after introducing our setting mathematically, we state the candidate controls of our optimisation problem and verify the optimality with the dual approach.

In section 3, we choose parameters and perform a modelling exercise based on common sense ideas related to bequest in retirement. For example, instead of exploiting the likelihood of events, we avoid situations where an investor loses a large proportion of their funds before receiving an income for many years. Our modelling exercise shows that having a full refund at the start and a declining bequest reaching zero in the following years leads to an almost linear allocation of funds in the tontine.  

We finish the paper with a summary of our results in section 4.

%-------------------
\section{Mathematical description and explicit solution}\label{section:setting}
%-------------------

We consider an investor with initial wealth $\,X_0>0\,$ and a deterministic force of mortality $\,\lambda_t\geq0\,$ for $\,t\geq0$. In particular, the survival probability is given by $\,S_t=\exp(-\int_0^t\lambda_u\,\mathrm{d}u)$.

We consider a financial market with a stock and a bank account. The bank account has a constant continuous interest rate $\,r\in\mathbb{R}\,$ and the stock price follows a geometric Brownian motion with constant drift $\,\mu>r\,$ and constant volatility $\,\sigma>0$.

The investor can dynamically decide on the proportion of wealth $\pi_t$ to invest in the stock, the proportion to consume $\,c_t\geq0$, and the proportion in the tontine $\,\alpha_t\leq1$. The proportion in the tontine yields an additional stream of funds to the investor's wealth through the redistribution of wealth from recently deceased members in the tontine. All controls might be random as long as they depend on the available information.

We neglect any transaction costs and assume no borrowing or short-selling constraints. We neglect any discretisation problems by assuming an infinitely large pool, i.e.\ the local rate of the number of recently deceased is $\lambda_t$. In particular, the dynamics of the investor's wealth process $X_t$ is
\begin{equation}\label{eq:X}
\begin{gathered}
    \mathrm{d}X_t/X_t=(r+(\mu-r)\pi_t-c_t+\alpha_t\lambda_t)\mathrm{d}t+\sigma\pi_t\mathrm{d}W_t,
    \\X_t=X_0\exp\Big(\int_0^t\!r+(\mu-r)\pi_u-c_u+\alpha_u\lambda_u\,\mathrm{d}u\Big)\,\mathcal{E}_t\Big(\int_0\sigma\pi_u\,\mathrm{d}W_u\Big),
\end{gathered}    
\end{equation}
in which $W_t$ is a Brownian motion and $\mathcal{E}$ stands for the stochastic exponential. 

To make mathematical sense out of (\ref{eq:X}), we assume that there is an underlying probability space $(\Omega,\mathcal{F},\mathbb{P})$ with filtration $(\mathcal{F}_t)_{t\geq0}$ for the Brownian motion and the controls. In particular, we assume that the controls $\,\pi_t,c_t,\alpha_t$ are admissible, i.e.\ adapted to $(\mathcal{F}_t)_{t\geq0}$ and $\,\int_0^t\pi_u^2\,\mathrm{d}u$, $\int_0^t|c_u|\,\mathrm{d}u$, $\int_0^t|\alpha_u|\,\mathrm{d}u<\infty\,$ for all $\,t\geq0\,$ and outcomes from $\Omega$.

The investor looks at the average sum of lifetime utility from consumption and bequest to decide on the controls $\,\pi_t,c_t,\alpha_t$. More precisely, we use power utility and try to maximise
\begin{equation*}
    \mathbb{E}\big[\int_0^\tau\!\mathrm{e}^{-\rho u}\big(c_uX_u\big)^{\gamma}/\gamma\,\mathrm{d}u+\mathrm{e}^{-\rho\tau}b_\tau\big((1-\alpha_\tau)X_\tau\big)^{\gamma}/\gamma\big],
\end{equation*}
in which $\tau$ is the random remaining lifetime of the investor, $\rho$ is the constant subjective discount rate, $\,\gamma\in(-\infty,1)\setminus{0}\,$ is the investor's constant relative risk aversion, and $\,b_t\geq0\,$ is a deterministic function that quantifies the investor's preference towards consumption or bequest. Note that (\ref{eq:X}) and $\,c_t\geq0\,$ and $\,\alpha_t\leq1\,$ imply that the income rate $\,c_tX_t\geq0\,$ and the bequest $\,(1-\alpha_t)X_t\geq0$, i.e.\ $(c_uX_u)^{\gamma}$ and $\,((1-\alpha_\tau)X_\tau)^{\gamma}$ are well-defined with the convention that $\,0^\gamma\,$ is either 0 or $\infty$ depending on the sign of $\gamma$.

We assume that the remaining lifetime $\tau$ is independent of the financial market so that we can assume that $\tau$ is independent of $(\mathcal{F}_t)_{t\geq0}$, see \citet[Lemma A.1]{BernhardtDonnelly2019} for more details. Now, we can remove $\tau$ from the average sum of consumption and bequest, i.e.\ our objective function is
\begin{equation}\label{eq:objective}
    \mathbb{E}\big[\int_0^\infty\!\mathrm{e}^{-\rho u}S_u\big(c_uX_u\big)^{\gamma}/\gamma+\mathrm{e}^{-\rho u}S_u\lambda_ub_u\big((1-\alpha_u)X_u\big)^{\gamma}/\gamma\,\mathrm{d}u\big].
\end{equation}

\begin{pro}\label{pro:controls}
Let $\,\beta=r+\frac{\rho-r}{1-\gamma}-\frac{1}{2}\frac{\gamma}{(1-\gamma)^2}(\frac{\mu-r}{\sigma})^2$.
Assume
\[\int_0^\infty\mathrm{e}^{-\beta u}S_u(1+b_u^{\frac{1}{1-\gamma}}\lambda_u)\,\mathrm{d}u<\infty.\]
Then the optimal controls of the objective (\ref{eq:objective}) are
\begin{align*}
    \pi^*_t&=\frac{1}{1-\gamma}\frac{\mu-r}{\sigma^2},
    \\c^*_t&=\frac{\mathrm{e}^{-\beta t}S_t}{\int_t^\infty\mathrm{e}^{-\beta u}S_u(1+b_u^{\frac{1}{1-\gamma}}\lambda_u)\,\mathrm{d}u},
    \\1-\alpha^*_t&=\frac{\mathrm{e}^{-\beta t}S_t\,b_t^{\frac{1}{1-\gamma}}}{\int_t^\infty\mathrm{e}^{-\beta u}S_u(1+b_u^{\frac{1}{1-\gamma}}\lambda_u)\,\mathrm{d}u}.
\end{align*}
\end{pro}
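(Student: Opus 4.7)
The plan is a standard HJB-plus-verification, with the verification for $\gamma<0$ being the real content. First I would guess $V(t,x)=\mathrm{e}^{-\rho t}S_tf(t)x^\gamma/\gamma$ with $f>0$, substitute into the HJB equation associated with (\ref{eq:objective}) and (\ref{eq:X}), and perform the pointwise maximisation. The $\pi$-optimisation is purely quadratic and gives the Merton ratio independent of $f$. Writing $g:=f^{1/(1-\gamma)}$, the first-order conditions in $c$ and $\alpha$ give $c^*=1/g$ and $1-\alpha^*=b_t^{1/(1-\gamma)}/g$. Substituting these back collapses the HJB to the scalar linear ODE
\[g'(t)-(\beta+\lambda_t)\,g(t)=-\big(1+\lambda_tb_t^{1/(1-\gamma)}\big),\]
with $\beta$ exactly as stated. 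Multiplying by the integrating factor $\mathrm{e}^{-\beta t}S_t$ and integrating from $t$ to $\infty$, with the boundary term $\mathrm{e}^{-\beta t}S_tg(t)\to 0$ killed by the proposition's integrability hypothesis, reproduces the three stated formulas.

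For the verification, applying It\^o to $V(t,X_t)+\int_0^t\mathrm{e}^{-\rho u}S_u\big((c_uX_u)^\gamma/\gamma+\lambda_ub_u((1-\alpha_u)X_u)^\gamma/\gamma\big)\,\mathrm{d}u$ and using that $V$ solves the HJB shows this process is a local martingale at the candidate controls and a local supermartingale for any admissible triple. Localising with $\tau_n\uparrow\infty$ and taking expectations yields
\[\mathbb{E}\Big[\int_0^T\mathrm{e}^{-\rho u}S_u\big((c_uX_u)^\gamma/\gamma+\lambda_ub_u((1-\alpha_u)X_u)^\gamma/\gamma\big)\mathrm{d}u\Big]\leq V(0,X_0)-\mathbb{E}[V(T,X_T)].\]
For $\gamma\in(0,1)$, $V\geq 0$, so $-\mathbb{E}[V(T,X_T)]\leq 0$, and monotone convergence in $T$ closes the upper bound; attainment at the candidate reduces to showing $\mathbb{E}[V(T,X^*_T)]\to 0$, which can be checked directly because $X^*$ is an explicit exponential of a deterministic drift plus scalar Brownian motion.

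The main obstacle is $\gamma<0$: then $V$ is negative, the objective integrand is nonpositive, Fatou points the wrong way, and the positive-local-supermartingale trick fails for $V(\cdot,X_\cdot)$ itself. This is the case flagged in the introduction and where the dual approach is mandatory. Using the state-price deflator $\zeta_t=\exp(-rt-\theta W_t-\tfrac12\theta^2t)$ with $\theta=(\mu-r)/\sigma$, a direct It\^o calculation shows that $\zeta_tS_tX_t+\int_0^t\zeta_uS_u(c_u+\lambda_u(1-\alpha_u))X_u\,\mathrm{d}u$ is a nonnegative local martingale, hence a supermartingale, yielding the budget inequality $\mathbb{E}[\int_0^\infty\zeta_uS_u(c_uX_u+\lambda_u(1-\alpha_u)X_u)\,\mathrm{d}u]\leq X_0$. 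Applying the Fenchel inequality $c^\gamma/\gamma\leq yc+\tfrac{1-\gamma}{\gamma}y^{-\gamma/(1-\gamma)}$ to the consumption and bequest integrands with $y=\eta\zeta_t\mathrm{e}^{\rho t}$ and $y=\eta\zeta_t\mathrm{e}^{\rho t}/b_t$ respectively, then summing and taking expectation, gives
\[\mathbb{E}[\text{objective}]\leq\eta X_0+\tfrac{1-\gamma}{\gamma}\eta^{-\gamma/(1-\gamma)}\int_0^\infty\mathrm{e}^{-\beta u}S_u\big(1+\lambda_ub_u^{1/(1-\gamma)}\big)\mathrm{d}u\]
for every $\eta>0$, the deterministic integral being exactly the integral in the standing hypothesis. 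Minimising in $\eta$ matches the right-hand side with $V(0,X_0)$, and at the minimising $\eta$ the Fenchel equalities hold at $(c^*,\alpha^*)$ so that the bound is attained. Everything except this last duality step is routine; the integrability hypothesis is stated precisely to keep every piece of the dual expression finite.
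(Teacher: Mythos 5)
Your proof is correct in outline and reaches the same formulas, but it runs the duality in the classical Pliska/Karatzas--Shreve ``static'' form rather than the way the paper does it. You introduce a free Lagrange multiplier $\eta$, pair the exogenous state-price density $\zeta_tS_t$ with the budget supermartingale to get $\mathbb{E}[\int_0^\infty\zeta_uS_u(c_u+\lambda_u(1-\alpha_u))X_u\,\mathrm{d}u]\leq X_0$, apply the Fenchel inequality with dual variables $\eta\zeta_t\mathrm{e}^{\rho t}$ and $\eta\zeta_t\mathrm{e}^{\rho t}/b_t$, and minimise over $\eta$. The paper instead uses the gradient (concavity) inequality (\ref{eq:concavity}) anchored directly at the candidate $(c^*,\alpha^*,X^*)$, so that the ``dual variable'' is already the correct one, $\zeta_t=\mathrm{e}^{-\rho t}\varphi_t(X^*_t)^{\gamma-1}$, and no optimisation over a multiplier is needed; the price is that it must separately argue that the correction terms in (\ref{eq:estimate}) are bounded, whereas your version gets a clean finite upper bound for every admissible control from the standing integrability hypothesis (your deterministic dual integral is exactly $\int_0^\infty\mathrm{e}^{-\beta u}S_u(1+b_u^{1/(1-\gamma)}\lambda_u)\,\mathrm{d}u$, as you note). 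The two are mathematically equivalent: your minimising $\eta^*$ recovers the paper's $\varphi_0$, and the Fenchel equality condition $c^*_uX^*_u=(\eta^*\zeta_u\mathrm{e}^{\rho u})^{-1/(1-\gamma)}$ is the paper's identity (\ref{eq:c^*=varphi}) in disguise.

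Two places where you are thinner than you should be. First, ``the bound is attained at $(c^*,\alpha^*)$'' requires not only the pointwise Fenchel equalities but also that the budget constraint holds with \emph{equality} for $X^*$, i.e.\ that the nonnegative local martingale is a true martingale and that $\mathbb{E}[\zeta_tS_tX^*_t]\to0$; this is precisely where the paper spends (\ref{eq:Y^*})--(\ref{eq:-infty}), and the divergence $\log\int_t^\infty f_u\,\mathrm{d}u\to-\infty$ in (\ref{eq:-infty}) is where the integrability hypothesis does real work beyond keeping the dual integral finite. You should make that step explicit. Second, your dual variable $\eta\zeta_t\mathrm{e}^{\rho t}/b_t$ is undefined on $\{b_t=0\}$; the fix is the paper's observation that one may take $1-\alpha_t=0$ there without loss of generality, so the bequest term drops out of both sides. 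Your separate primal treatment of $\gamma\in(0,1)$ is fine but unnecessary --- the dual argument covers both signs of $\gamma$ uniformly, which is how the paper proceeds.
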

\begin{proof}
    We compare an arbitrary state process $X$ corresponding to controls $\pi,c,\alpha$ with the process $X^*$ corresponding to the explicit candidate controls $\,\pi^*,c^*,\alpha^*$ from Proposition \ref{pro:controls}. We use the concavity of the utility function, more precisely, since $\,0\neq\gamma<1$,
    \begin{equation}\label{eq:concavity}
        x^\gamma/\gamma\leq y^\gamma/\gamma + y^{\gamma-1}(x-y)\quad\text{for all $\,x\geq0\,$ and $\,y>0$}.
    \end{equation}
    Let us begin by estimating the objective function of $X$ with the objective function of $X^*$,
    \begin{equation}\label{eq:estimate}
    \begin{aligned}
        \mathbb{E}&\big[\int_0^\infty\!\mathrm{e}^{-\rho u}S_u\big(c_uX_u\big)^{\gamma}/\gamma+\mathrm{e}^{-\rho u}S_u\lambda_ub_u\big((1-\alpha_u)X_u\big)^{\gamma}/\gamma\,\mathrm{d}u\big]
        \\\leq&\;\mathbb{E}\big[\int_0^\infty\!\mathrm{e}^{-\rho u}S_u\big(c^*_uX^*_u\big)^{\gamma}/\gamma+\mathrm{e}^{-\rho u}S_u\lambda_ub_u\big((1-\alpha^*_u)X^*_u\big)^{\gamma}/\gamma\,\mathrm{d}u\big]
        \\&+\mathbb{E}\big[\int_0^\infty\!\mathrm{e}^{-\rho u}S_u\big(c^*_uX^*_u\big)^{\gamma-1}(c_uX_u-c_u^*X_u^*)\,\mathrm{d}u\big]
        \\&+\mathbb{E}\big[\int_0^\infty\!\mathrm{e}^{-\rho u}S_u\lambda_ub_u\big((1-\alpha^*_u)X^*_u\big)^{\gamma-1}((1-\alpha_u)X_u-(1-\alpha^*_u)X_u^*)\,\mathrm{d}u\big].
    \end{aligned}
    \end{equation}
    Note that there is an integrability issue in the above inequality and that the above inequality only holds if the sum of the last two summands is bounded. Also, note that the times when $\,1-\alpha_t^*=0\,$ is not an issue with equation (\ref{eq:concavity}) because $\,1-\alpha^*_t=0\,$ if and only if $\,b_t=0$. 
    
    It is enough to show that the sum of the last two summands in (\ref{eq:estimate}) is negative and bounded to conclude that (\ref{eq:estimate}) holds and that the objective function of any controls is smaller than the objective function of the candidate controls. Let 
    \begin{equation}\label{eq:varphi}
        \varphi_t=S_t/(c^*_t)^{1-\gamma},
    \end{equation}
    then 
    \begin{equation}\label{eq:c^*=varphi}
    \begin{aligned}
        c^*_t&=(S_t/\varphi_t)^\frac{1}{1-\gamma},
        \\1-\alpha^*_t&=(S_t/\varphi_t)^\frac{1}{1-\gamma}(b_t)^\frac{1}{1-\gamma},
    \end{aligned}
    \end{equation}
    thus the last two summands in (\ref{eq:estimate}) simplify to
    \[
        \mathbb{E}\big[\int_0^\infty\!\mathrm{e}^{-\rho u}\varphi_u(X^*_u)^{\gamma-1}\big(c_uX_u+\mathbbm{1}_{b_u>0}\lambda_u(1-\alpha_u)X_u-c_u^*X_u^*-\mathbbm{1}_{b_u>0}\lambda_u(1-\alpha^*_u)X_u\big)\,\mathrm{d}u\big].
    \]
    Moreover, we can assume that $\,1-\alpha_t=0\,$ whenever $\,b_t=0\,$ without loss of generality because those two terms appear in the original objective function (\ref{eq:estimate}) together. Thus, the last two summands in (\ref{eq:estimate}) can be expressed as
    \begin{equation}\label{eq:noindicators}
        \mathbb{E}\big[\int_0^\infty\!\mathrm{e}^{-\rho u}\varphi_u(X^*_u)^{\gamma-1}\big(c_uX_u+\lambda_u(1-\alpha_u)X_u-c_u^*X_u^*-\lambda_u(1-\alpha^*_u)X_u\big)\,\mathrm{d}u\big].
    \end{equation}
    
    We investigate the factor $\,\zeta_t=\mathrm{e}^{-\rho t}\varphi_t(X^*_t)^{\gamma-1}$ in the above integrand. It is called the state-price density process in the optimal control literature. It is linked to the value function through its first partial derivative with respect to the state. First, we derive an ODE for $\varphi$. From (\ref{eq:varphi}) and (\ref{eq:c^*=varphi}), we get    
    \begin{equation}\label{eq:varphi'}
    \begin{gathered}
        \varphi_t'/\varphi_t=(1-\gamma)(\beta-c^*_t+\alpha^*_t\lambda_t)-\lambda_t,
        \\\varphi_t=\varphi_0\exp\Big((1-\gamma)\int_0^t\!\beta-c^*_u+\alpha^*_u\lambda_u\,\mathrm{d}u-\int_0^t\!\lambda_u\,\mathrm{d}u\Big).
    \end{gathered}
    \end{equation}
    Combining (\ref{eq:X}), (\ref{eq:varphi'}) with $\beta,\,\pi^*$ from Proposition \ref{pro:controls} yields
    \begin{equation}\label{eq:zeta}
    \begin{gathered}
        \zeta_t=\mathrm{e}^{-\rho t}\varphi_t(X^*_t)^{\gamma-1}=\varphi_0\mathrm{e}^{-rt}\exp\Big(-\int_0^t\!\lambda_u\,\mathrm{d}u\Big)\,\mathcal{E}_t\Big(-\frac{\mu-r}{\sigma}W\Big),
        \\\mathrm{d}\zeta_t/\zeta_t=-(r+\lambda_t)\mathrm{d}t-\frac{\mu-r}{\sigma}\mathrm{d}W_t.
    \end{gathered}
    \end{equation}
    Now, we are ready to showcase the main argument of the verification. We define and derive the SDE of the following process using (\ref{eq:X}) and (\ref{eq:zeta}),
    \begin{equation}\label{eq:Y}
    \begin{gathered}
        Y_t=\zeta_tX_t+\int_0^t\!\zeta_u(c_uX_u+\lambda_u(1-\alpha_u)X_u)\,\mathrm{d}u,
        \\
        \begin{aligned}
            \mathrm{d}Y_t=&\,\zeta_t\mathrm{d}X_t+X_t\mathrm{d}\zeta_t+\mathrm{d}[\zeta,X]_t+\zeta_t(c_tX_t+\lambda_t(1-\alpha_t)X_t)\mathrm{d}t
            \\=&\,(r+(\mu-r)\pi_t-c_t+\alpha_t\lambda_t)\zeta_tX_t\mathrm{d}t+\text{local martingale} 
            \\&-(r+\lambda_t)\zeta_tX_t\mathrm{d}t+\text{local martingale} 
            \\&-(\mu-r)\pi\zeta_tX_t\mathrm{d}t
            \\&+\zeta_t(c_tX_t+\lambda_t(1-\alpha_t)X_t)\mathrm{d}t
            \\=&\,\text{local martingale}.
        \end{aligned}
    \end{gathered}
    \end{equation}
    Note that $\,\zeta_tX_t$ as well as $\,\zeta_t(c_tX_t+\lambda_t(1-\alpha_t)X_t)\,$ are positive. Thus, $Y$ is a supermartingale because it is a positive local martingale, and furthermore
    \begin{equation}\label{eq:Y0}
    \begin{aligned}
        Y_0&\geq\liminf_{t\rightarrow\infty}\mathbb{E}\big[\zeta_tX_t+\int_0^t\!\zeta_u(c_uX_u+\lambda_u(1-\alpha_u)X_u)\,\mathrm{d}u\big]
        \\&\geq\liminf_{t\rightarrow\infty}\mathbb{E}\big[\int_0^t\!\zeta_u(c_uX_u+\lambda_u(1-\alpha_u)X_u)\,\mathrm{d}u\big] 
        \\&=\mathbb{E}\big[\int_0^\infty\!\zeta_u(c_uX_u+\lambda_u(1-\alpha_u)X_u)\,\mathrm{d}u\big]. 
    \end{aligned}
    \end{equation}
    
    Next, we show that the above inequalities are equalities for $X^*$. Reexamining (\ref{eq:Y}) yields
    \begin{equation}\label{eq:Y^*}
    \begin{aligned}
        \mathrm{d}Y^*_t&=\zeta_t\mathrm{d}X^*_t+X^*_t\mathrm{d}\zeta_t+\mathrm{d}[\zeta,X^*]_t+\zeta_t(c^*_tX^*_t+\lambda_t(1-\alpha^*_t)X^*_t)\mathrm{d}t
        \\&=\zeta_tX_t^*\sigma\pi_t^*\mathrm{d}W_t-\zeta_tX_t^*\frac{\mu-r}{\sigma}\mathrm{d}W_t
        \\&=\zeta_tX_t^*\big(\sigma\pi_t^*-\frac{\mu-r}{\sigma}\big)\mathrm{d}W_t.
    \end{aligned}
    \end{equation}
    Reexamining (\ref{eq:X}) and (\ref{eq:zeta}) yields
    \[\zeta_tX_t^*=\varphi_0X_0\exp\Big(-\int_0^t\!c^*_u+\lambda_u(1-\alpha_u^*)\,\mathrm{d}u\Big)\,\mathcal{E}_t\Big(\int_0\sigma\pi^*_u-\frac{\mu-r}{\sigma}\,\mathrm{d}W_u\Big),\]
    in particular, using $\alpha^*,c^*,\pi^*$ deterministic,
    \begin{align}
        \mathbb{E}[\zeta_tX_t^*]&=\varphi_0X_0\exp\Big(-\int_0^t\!c^*_u+\lambda_u(1-\alpha_u^*)\,\mathrm{d}u\Big),\label{eq:1stmoment}
        \\\mathbb{E}[(\zeta_tX_t^*)^2]&\leq(\varphi_0X_0)^2\exp\Big(\int_0^t\!\big(\sigma\pi^*_u-\frac{\mu-r}{\sigma}\big)^2\,\mathrm{d}u\Big).\label{eq:2ndmoment}
    \end{align}
    Combining (\ref{eq:Y^*}) with (\ref{eq:2ndmoment}) and noting that $\mu,r,\sigma,\pi^*$ are constants shows that the local martingale $Y^*$ has an integrable quadratic variation, i.e.\,$Y^*$ is a martingale and the first inequality in (\ref{eq:Y0}) is an equality for $X^*$. Furthermore, let
    \[f_t=\mathrm{e}^{-\beta t}S_t(1+b_t^{\frac{1}{1-\gamma}}\lambda_t)\]
    then using $\alpha^*,c^*$ and the constraint $\int_0^\infty\!f_u\,\mathrm{d}u<\infty\,$ from Proposition \ref{pro:controls} yields
    \begin{equation}\label{eq:-infty}
    \begin{aligned}
        -\int_0^t\!c^*_u+\lambda_u(1-\alpha_u^*)\,\mathrm{d}u&=\int_0^t\!\frac{-f_s}{\int_s^\infty\!f_u\,\mathrm{d}u}\,\mathrm{d}s=\log\int_t^\infty\!f_u\,\mathrm{d}u-\log\int_0^\infty\!f_u\,\mathrm{d}u
        \\&\xrightarrow{t\uparrow\infty}-\infty.
    \end{aligned}
    \end{equation}
    Combining (\ref{eq:1stmoment}) and (\ref{eq:-infty}) shows that the second inequality in (\ref{eq:Y0}) is an equality for $X^*$. 
    
    Overall, (\ref{eq:Y0}) holds with equality for $X^*$, thus
    \[
    \begin{aligned}
        0&=\,Y_0-Y_0
        \\&\geq\mathbb{E}\big[\int_0^\infty\!\zeta_u(c_uX_u+\lambda_u(1-\alpha_u)X_u)\,\mathrm{d}u\big]-\mathbb{E}\big[\int_0^\infty\!\zeta_u(c^*_uX^*_u+\lambda_u(1-\alpha^*_u)X^*_u)\,\mathrm{d}u\big]
        \\&=\mathbb{E}\big[\int_0^\infty\!\mathrm{e}^{-\rho u}\varphi_u(X^*_u)^{\gamma-1}\big(c_uX_u+\lambda_u(1-\alpha_u)X_u-c_u^*X_u^*-\lambda_u(1-\alpha^*_u)X_u\big)\;\mathrm{d}u\big],
    \end{aligned}    
    \]
    which implies that (\ref{eq:noindicators}) is negative. Moreover, (\ref{eq:noindicators}) is bounded because both terms in the above difference are positive and the subtrahend is the finite value $Y_0$. Thus, the sum of the last two summands in (\ref{eq:estimate}) is negative and bounded. Thus, (\ref{eq:estimate}) yields
    \[
    \begin{aligned}
        &\mathbb{E}\big[\int_0^\infty\!\mathrm{e}^{-\rho u}S_u\big(c_uX_u\big)^{\gamma}/\gamma+\mathrm{e}^{-\rho u}S_u\lambda_ub_u\big((1-\alpha_u)X_u\big)^{\gamma}/\gamma\,\mathrm{d}u\big]
        \\&\leq\mathbb{E}\big[\int_0^\infty\!\mathrm{e}^{-\rho u}S_u\big(c^*_uX^*_u\big)^{\gamma}/\gamma+\mathrm{e}^{-\rho u}S_u\lambda_ub_u\big((1-\alpha^*_u)X^*_u\big)^{\gamma}/\gamma\,\mathrm{d}u\big],
    \end{aligned}
    \]
    i.e.\ any control has a lower utility than the candidate controls from Proposition \ref{pro:controls}, which means that the candidate controls are optimal.
\end{proof}

%------------------------
\section{Numerical experiments}\label{section:derivation}
%------------------------

\subsection*{Choosing parameter values}

We choose $\,\mu=10\%$, $\,\sigma=20\%$ and $\,r=3\%$, which roughly coincide with historical values for S\&P 500 and inflation in Western countries according to \citet{Investopedia2024}. We chose that particular index because it is the de facto industry-standard benchmark for equities.

We determine the level of risk aversion $\gamma$ by matching $\pi^*$ from Proposition \ref{pro:controls} with the equity allocation of pension portfolios. We pick a range of parameters because there is no industry consensus about the right equity level in retirement. For example, \citet{PensionTimes2022} states a range of 20\% to 40\%; while governmental institutions like the \citet[Table No IV.2]{SP2003} that adopted drawdown in retirement go as low as 5\%, see \citet{FuentesFullmerGarcia2024} for an overview of the Chilean Pension System and adopting tontines. We include risk aversions exceeding equity levels of over 100\%, even if they violate potential borrowing constraints, to showcase the implications of utility maximisation. We choose $\gamma$ to be $\,0.5, -1, -3, -5, -8\,$ and $-11\,$ corresponding to equity allocations of 350\%, 87.50\%, 43.75\%, 29.17\%, 19.4\% and 14.6\%.

We link $\rho$ to inflation rather than seeing it as a subjective discount rate. We discount our wealth to evaluate the purchasing power with the utility function over time. Note that purchasing power changes with inflation but our preferences $\,x\mapsto x^\gamma/\gamma\,$ are fixed. More precisely, to interpret our results in terms of purchasing power, we need to look at the objective function 
\[\mathbb{E}\big[\int_0^\infty\!S_u\big(\mathrm{e}^{-ru}c_uX_u\big)^{\gamma}/\gamma+S_u\lambda_ub_u\big(\mathrm{e}^{-ru}(1-\alpha_u)X_u\big)^{\gamma}/\gamma\,\mathrm{d}u\big],\]
which coincides with our objective function (\ref{eq:objective}) when we assume
\[\rho=r\gamma.\]

We use the UK period life table for both sexes between ages 65 and 110 from the \citet{HMD-UK-2019} as the basis for our mortality distribution. We fit a continuous distribution to our discrete data because we need a force of mortality $\lambda$. More precisely, we use the Gompertz-Makeham distribution with a force of mortality of
\[\lambda_t= a_1\mathrm{e}^{a_2t}+a_3\quad\text{for years $\,t\geq0\,$ after age 65}\]
and constants $\,a_1,a_2,a_3\geq0$. Minimising the squared distance between our data's annual survival probabilities and the theoretical probabilities of our continuous distribution yields $\,a_1=0.00584$, $\,a_2=0.12150\,$ and $\,a_3=0.0024117$. 

%----------------------------------------------
\subsection*{Modelling time preferences}
%----------------------------------------------

In the last subsection, we specified all our model parameters except for the function $b$. Here, we use common sense arguments to determine a reasonable form for $b$ and then tweak its shape to avoid obvious strange behaviour.  

We focus on individuals who are considering investing in a tontine. First, we consider an opportunistic individual who recognises that dying is unlikely early in retirement and likely at an advanced age. Thus, to maximise their benefits, they would have a weaker bequest motive early on than at an advanced age, i.e.\ aligning their motive with the likelihood of dying. We interpret such an opportunistic individual as risk-tolerant because they ignore cases when they enter the tontine, die early, and get neither the benefit of many payments nor a bequest. Because we can interpret tolerance to risk in terms of $\gamma$, we can say that $\,\gamma>0\,$ yields preferences like $\lambda_t$ over time $\,t\geq0$.

Now, consider a conservative individual who is afraid of losing money. Losing money for no return early in retirement is a bigger problem for such a person than the inability to bequeath after using/living off their funds in retirement. Since the force of mortality is increasing, we can make a similar statement as for the opportunistic individual, i.e.\ we can say that $\,\gamma<0\,$ yields preferences like $\,1/\lambda_t$ over time $\,t\geq0$.  

Because $\gamma$ changes smoothly from one type of preference to the other, the preference is reasonably given by 
\begin{equation}\label{eq:power-preferences}
    b_t=(\lambda_t)^\gamma\quad\text{for $\,t\geq 0$}.
\end{equation}    
It is worth noting that such preferences fulfil the integrability constraint of Proposition \ref{pro:controls}. To see this, recall that we model the mortality with a Gompertz-Makeham, which has an exponentially increasing force of mortality, i.e.\ $\,u\mapsto\mathrm{e}^{-\beta u}(1+\lambda_u^{\gamma/(1-\gamma)}\lambda_u)\,$ is at most exponentially increasing, but $\,S_u=\exp(-\int_0^u\lambda_s\,\mathrm{d}s)\,$ decreases faster than any exponential decay. In particular, the product of the above two terms decreases faster than any exponential decay and is, therefore, integrable over the positive half-line.

\begin{center}
    \includegraphics[trim=0 10 0 30,width=0.45\linewidth]{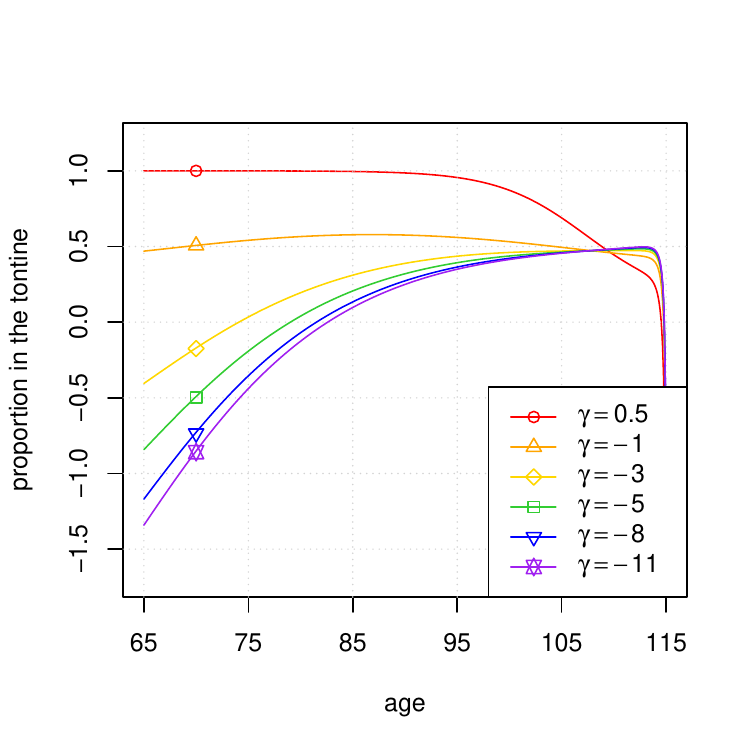} 
    \captionof{figure}{Proportion of wealth allocated to the tontine when the force of mortality directly influences preferences.}
    \label{fig:lambda} 
\end{center}

Figure \ref{fig:lambda} shows the changing proportion of funds in the tontine over time when we use the preferences (\ref{eq:power-preferences}). The curves roughly correspond to our common sense arguments, i.e.\ opportunistic individuals (high $\gamma$) have a decreasing proportion over time while conservative individuals (low $\gamma$) have an increasing one. Focusing on ages 65 to 95, we might see a curve exhibiting an up-and-down movement for some $\gamma$ like $-1$, but we only see a monotone increase for any $\,\gamma\leq-3\,$ corresponding to industry standard levels of equity below 43.75\% for post-retirement funds.

However, there are two issues with Figure \ref{fig:lambda}. First, the initial proportion in the tontine is negative for $\,\gamma\leq-3$. The proportion starts being positive at some age after 75. We could interpret that as tontines being unattractive until age 75, but that would enforce preferences (\ref{eq:power-preferences}) upon an individual willing to invest in a tontine at 65. One way to rectify the situation is to recognise that we only used relative preferences instead of absolute ones in our argumentations. Thus, we can introduce a constant scaling parameter $\,\kappa>0$, i.e.
\begin{equation}\label{eq:power-preferences-scaled}
    b_t=\kappa(\lambda_t)^\gamma\quad\text{for $\,t\geq 0$}.
\end{equation}
That parameter is not free because we know that the individual begins to consider investing in a tontine at age 65, i.e.\ $\kappa$ should be chosen such that $\,\alpha^*_0=0$. As it turns out, this is only possible when $\,\gamma<0$, which coincides with our understanding that $\,\gamma>0\,$ corresponds to an opportunistic individual with an initial allocation of 100\% in the tontine.

\begin{pro} Let $\,\rho=r\gamma\,$ and $\,\lambda_0<\lambda_u\,$ for $\,u>0$. Then, there is $\,\kappa>0\,$ such that $\,b_t=\kappa(\lambda_t)^\gamma\,$ yields $\,\alpha_0^*=0\,$ in Proposition \ref{pro:controls} if and only if $\,\gamma<0$.
\end{pro}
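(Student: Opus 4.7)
My plan is to reduce the existence of $\kappa$ to a single scalar inequality and then dispatch both implications by examining the sign of $\gamma$. Writing $q=1/(1-\gamma)$, the substitution $b_t=\kappa\lambda_t^\gamma$ produces the telescoping identities $b_t^q=\kappa^q\lambda_t^{q-1}$ (using $\gamma q=q-1$) and $b_u^q\lambda_u=\kappa^q\lambda_u^q$. Inserting these into the formula for $1-\alpha_t^*$ from Proposition \ref{pro:controls} at $t=0$ (with $S_0=1$) and imposing $\alpha_0^*=0$ produces the linear equation
\[
\kappa^q\bigl(\lambda_0^{q-1}-B\bigr)=A,\qquad A:=\int_0^\infty e^{-\beta u}S_u\,\mathrm{d}u,\qquad B:=\int_0^\infty e^{-\beta u}S_u\lambda_u^{q}\,\mathrm{d}u.
\]
Both $A$ and $B$ are positive and finite (the integrability hypothesis of Proposition \ref{pro:controls} reads exactly $A+\kappa^q B<\infty$), so a positive $\kappa$ exists if and only if the strict inequality $\lambda_0^{q-1}>B$ holds.

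Next, I would use $\rho=r\gamma$ to simplify $\beta=-\frac{\gamma}{2(1-\gamma)^2}\bigl(\frac{\mu-r}{\sigma}\bigr)^2$, noting that $\mathrm{sign}(\beta)=-\mathrm{sign}(\gamma)$. Integration by parts based on $\mathrm{d}S_u/\mathrm{d}u=-\lambda_u S_u$ yields the identity
\[
\int_0^\infty e^{-\beta u}S_u\lambda_u\,\mathrm{d}u=1-\beta A,
\]
which is strictly positive for either sign of $\beta$. Hence $P(\mathrm{d}u):=\frac{1}{1-\beta A}e^{-\beta u}S_u\lambda_u\,\mathrm{d}u$ defines a probability measure on $(0,\infty)$ under which
\[
B=(1-\beta A)\,\mathbb{E}_P[\lambda^{q-1}].
\]

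The conclusion then follows from plain monotonicity rather than Jensen. By hypothesis $\lambda_u>\lambda_0$ for every $u>0$, so $\lambda>\lambda_0$ holds $P$-almost surely. If $\gamma>0$ then $q-1>0$ and $x\mapsto x^{q-1}$ is increasing, giving $\mathbb{E}_P[\lambda^{q-1}]>\lambda_0^{q-1}$; combined with $\beta<0$ (hence $1-\beta A>1$), this forces $B>\lambda_0^{q-1}$ and no admissible $\kappa$ exists. If $\gamma<0$ then $q-1<0$, the map is decreasing, $\mathbb{E}_P[\lambda^{q-1}]<\lambda_0^{q-1}$, and $\beta>0$ yields $1-\beta A<1$, so $B<\lambda_0^{q-1}$ and the unique positive value $\kappa=[A/(\lambda_0^{q-1}-B)]^{1-\gamma}$ does the job.

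The only delicate step I anticipate is justifying the vanishing boundary term $e^{-\beta u}S_u\to 0$ in the integration by parts when $\beta<0$. This cannot be read off from the integrability hypothesis of Proposition \ref{pro:controls} alone; it requires invoking the super-exponential decay of $S_u$ under the Gompertz-Makeham model used in Section \ref{section:derivation}, where $\int_0^u\lambda_s\,\mathrm{d}s$ grows like $e^{a_2 u}$ and therefore overwhelms any linear function of $u$.
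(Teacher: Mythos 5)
Your proof is correct, but it takes a genuinely different route from the paper. The paper also reduces the existence of $\kappa$ to the scalar inequality $\lambda_0^{\gamma/(1-\gamma)}-\int_0^\infty e^{-\beta u}S_u\lambda_u^{1/(1-\gamma)}\,\mathrm{d}u>0$, but then treats the left-hand side as a function $f(\gamma)$, observes $f(0)=0$ (where $\beta=0$ under $\rho=r\gamma$), derives the strict differential inequality $f'(\gamma)<f(\gamma)\log(\lambda_0)/(1-\gamma)^2$ from the hypothesis $\lambda_0<\lambda_u$, and concludes by Gr\"onwall that $f$ changes sign exactly at $\gamma=0$. You instead fix $\gamma$ and compare the two sides directly: integration by parts turns $e^{-\beta u}S_u\lambda_u\,\mathrm{d}u$ into a measure of total mass $1-\beta A$, and the comparison of $B=(1-\beta A)\,\mathbb{E}_P[\lambda^{q-1}]$ with $\lambda_0^{q-1}$ follows from the monotonicity of $x\mapsto x^{q-1}$ together with $\mathrm{sign}(\beta)=-\mathrm{sign}(\gamma)$. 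Your argument avoids differentiating $\beta$ and the integral in $\gamma$ and makes visible exactly where each hypothesis enters ($\rho=r\gamma$ fixes the sign of $\beta$; $\lambda_0<\lambda_u$ fixes the direction of the mean comparison); the paper's argument is shorter on the page but hides the justification of the differentiation under the integral sign and of the strictness of the differential inequality inside one line.

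One correction to your closing remark: the vanishing of the boundary term $e^{-\beta u}S_u$ does \emph{not} require the Gompertz--Makeham model; it already follows from $A=\int_0^\infty e^{-\beta u}S_u\,\mathrm{d}u<\infty$, which is part of the integrability hypothesis of Proposition \ref{pro:controls}. Indeed, writing $g(T)=e^{-\beta T}S_T$ one has
\begin{equation*}
g(T)=1-\beta\int_0^T e^{-\beta u}S_u\,\mathrm{d}u-\int_0^T e^{-\beta u}S_u\lambda_u\,\mathrm{d}u .
\end{equation*}
The middle term converges to $\beta A$ and the last term is nondecreasing, so $g(T)$ converges to some $L\in[0,\infty)$ (a limit of $-\infty$ is excluded by $g\geq0$); if $L>0$ then $g(T)\geq L/2$ eventually, contradicting $A<\infty$. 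Hence $L=0$ and the identity $\int_0^\infty e^{-\beta u}S_u\lambda_u\,\mathrm{d}u=1-\beta A$ holds in the generality of the proposition, so your proof does not silently weaken the statement to the specific mortality law of Section \ref{section:derivation}.
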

\begin{proof}
    $\alpha_0^*=0\,$ with $\,\kappa>0\,$ is equivalent to $\,\lambda_0^{\frac{\gamma}{1-\gamma}}-\int_0^\infty\mathrm{e}^{-\beta u}S_u\lambda_u^{\frac{1}{1-\gamma}}\,\mathrm{d}u>0$. Let
    \[f(\gamma)=\lambda_0^{\frac{\gamma}{1-\gamma}}-\int_0^\infty\mathrm{e}^{-\beta u}S_u\lambda_u^{\frac{1}{1-\gamma}}\,\mathrm{d}u\]
    be the left-hand side of the inequality. Note that $\beta$ depends on $\gamma$, see Proposition \ref{pro:controls}. Using $\,\rho=r\gamma\,$ and $\,\lambda_0<\lambda_u\,$ for $\,u>0\,$ yields
    \begin{gather*}
        f(0)=1-\int_0^\infty\!S_u\lambda_u\,\mathrm{d}u=1-S(0)=0,
        \\f'(\gamma)< f(\gamma)\frac{\log(\lambda_0)}{(1-\gamma)^2}.
    \end{gather*}
    Applying Gr\"{o}nwall's lemma gives
    \begin{equation*}
        f(\gamma)\;
        \begin{cases}
            <f(0)\exp(\int_0^\gamma\frac{\log(\lambda_0)}{(1-\gamma)^2}\,\mathrm{d}u)=0&\quad\text{for $\,\gamma>0$}
            \\>f(0)\exp(\int_\gamma^0\frac{\log(\lambda_0)}{(1-\gamma)^2}\,\mathrm{d}u)=0&\quad\text{for $\,\gamma<0$}
        \end{cases}.
    \end{equation*}
    Overall,  $\,\lambda_0^{\frac{\gamma}{1-\gamma}}-\int_0^\infty\mathrm{e}^{-\beta u}S_u\lambda_u^{\frac{1}{1-\gamma}}\,\mathrm{d}u>0\,$ if and only if $\,\gamma<0$, which is equivalent to the existence of $\,\kappa>0\,$ such that $\,\alpha_0^*=0$.
\end{proof} 

The plummeting proportion in the tontine around age 115 is the second issue with Figure (\ref{fig:lambda}). The fall is a numerical artefact originating from imposing a limiting age of 115 to deal with unbounded integrals. Extending the limiting age would remove the fall from 115 but reveal other hard-to-explain behaviour at extreme ages. For example, we can already observe two inflexion points in the curve of $\,\gamma=0.5\,$ in Figure (\ref{fig:lambda}). Generally speaking, our objective function (\ref{eq:objective}) tends to give us answers that do not follow common sense at extreme ages. \citet[p.174]{BernhardtDonnelly2019} encountered such a problem already and circumvented it by introducing a limiting age for their setup with a fixed proportion in the tontine, which creates numerical issues in our case. As before, we rectify the situation by adjusting our preferences. More precisely, we remove extreme ages from the bequest motive, e.g.
\begin{equation}\label{eq:power-preferences-trimmed}
    b_t=\Big(\frac{1}{\frac{1}{\lambda_t}-\frac{1}{\lambda_{20}}}\Big)^\gamma\quad\text{for $\,t\in[0,20]$, and $0$ otherwise}.
\end{equation}
We specifically have a person in mind who has a bequest motive early in retirement but realises they need their funds for themselves when they have not died prematurely. This description matches our conservative individual best. Thus, the preferences (\ref{eq:power-preferences-trimmed}) are such that the bequest motive changes continuously from a finite value to zero after a fixed time, here 20 years, when $\,\gamma<0$. To some degree, the preferences are also reasonable for an opportunistic person whose preferences align with an increasing $\lambda$. Their bequest motive is the largest when the likelihood of dying is the largest; hence the bequest will be the largest right before it becomes unavailable. Thus, it makes sense that (\ref{eq:power-preferences-trimmed}) is infinity when a bequest becomes unavailable for $\,\gamma>0$. However, even though the preferences (\ref{eq:power-preferences-trimmed}) yield well-defined controls, they do not fulfil the integrability constraint of Proposition \ref{pro:controls} for $\,\gamma>0$.   

\begin{center}
    \includegraphics[trim=0 10 0 30,width=0.45\linewidth]{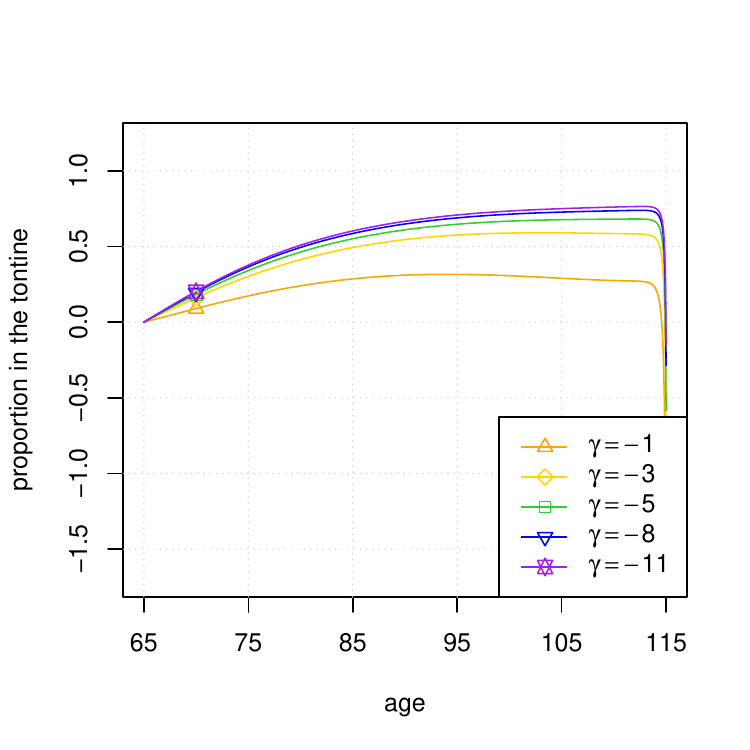}
    \includegraphics[trim=0 10 0 30,width=0.45\linewidth]{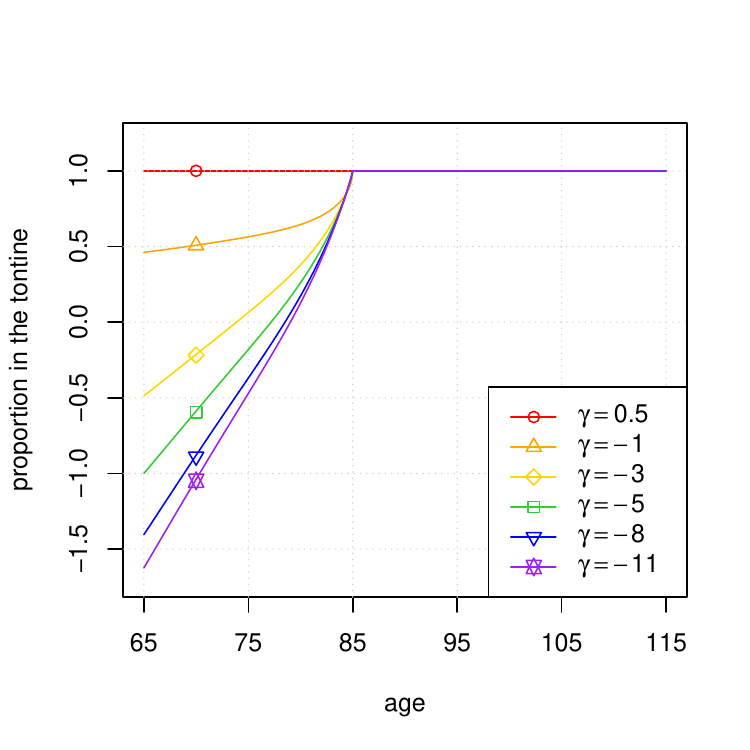}
    \captionof{figure}{Proportion of wealth allocated to the tontine when initial allocation has to be 0 at initial age 65 (left) and when bequest motive vanishes continuously after 20 years (right).  
    }
    \label{fig:lambda-scale-or-trim} 
\end{center}

Figure \ref{fig:lambda-scale-or-trim} shows the changing fund allocation in the tontine over time when we use the preferences (\ref{eq:power-preferences-scaled}) on the left and the preferences (\ref{eq:power-preferences-trimmed}) on the right. We removed $\,\gamma=0.5\,$ from the left picture because there is no scale $\,\kappa>0\,$ yielding $\,\alpha^*_0=0\,$ in this case. We see on the left that requiring $\,\alpha^*_0=0\,$ yields proportions that are all positive. In addition, note that the curves are fairly close together when $\,\gamma\leq-3$, indicating some robustness. And, we see on the right that excluding advanced ages from bequest motives yields consistently monotone increasing proportions. Indeed, both adjustments (\ref{eq:power-preferences-scaled}) and (\ref{eq:power-preferences-trimmed}) remove the strange behaviour observed in Figure \ref{fig:lambda}.

\begin{center}
    \includegraphics[trim=0 10 0 30,width=0.45\linewidth]{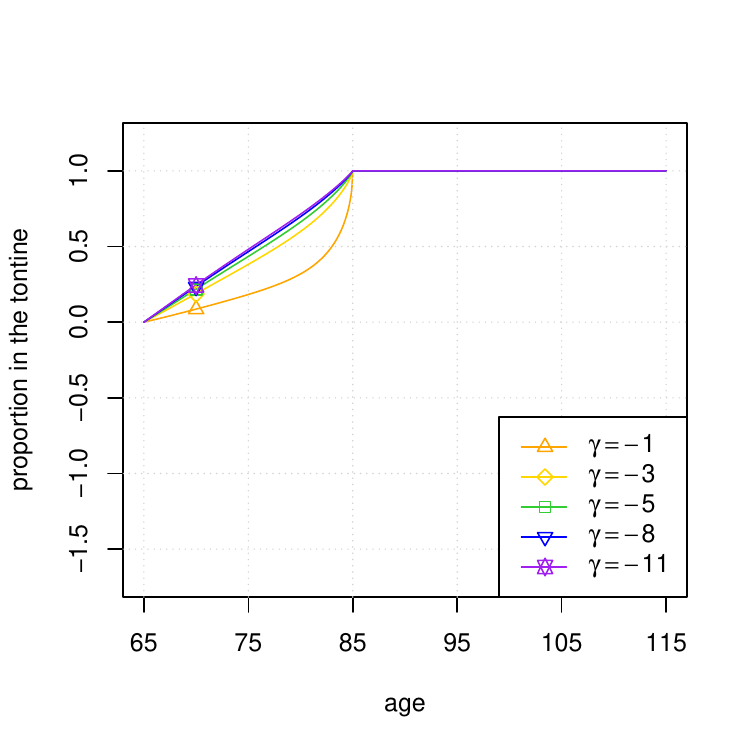} 
    \captionof{figure}{Proportion of wealth allocated to the tontine when initial allocation has to be 0 at initial age 65 in addition to a vanishing bequest motive after 20 years.}
    \label{fig:lambda-scale-and-trim} 
\end{center}

Figure \ref{fig:lambda-scale-and-trim} shows the changing fund allocation in the tontine when we combine the two previous adjustments, i.e.\ it depicts the proportion based on our most reasonable bequest preferences, which are 
\begin{equation}\label{eq:power-preferences-scaled-and-trimmed}
    b_t=\kappa\Big(\frac{1}{\frac{1}{\lambda_t}-\frac{1}{\lambda_{20}}}\Big)^\gamma\quad\text{for $\,t\in[0,20]$, and $0$ otherwise},
\end{equation}
with $\,\kappa>0\,$ such that the initial allocation in the tontine is 0, i.e.\;$\alpha_0^*=0\,$ and $\,\gamma<0$. In addition to seeing no issues like in Figure \ref{fig:lambda},  we see an almost linear change in the tontine allocation for $\,\gamma\leq-3\,$ corresponding to industry standard levels of equity. We would expect a linear adjustment in practice compared to an alternative complicated one. Even though we cannot find examples because modern tontines do not exist in the industry yet to our knowledge, we can find pension products with bequest benefits that linearly adjust bequest benefits over time like \citet[pp.7-8]{Challenger2022}; note that a linear adjustment in the tontine allocation yields a complementary linear adjustment in the bequest benefit according to our objective (\ref{eq:objective}).   

%----------------------------------------------
\subsection*{Sanity check}
%----------------------------------------------

Previously, we focused exclusively on the fund allocation in the tontine. We tweaked the unknown bequest preference $b$ until the proportion $\alpha^*$ followed some rules. However, that might have resulted in nonsensical changes in the other controls, especially $c^*$ (the control $\pi^*$ is independent of $b$). We look at discounted average income rate $\,t\mapsto\mathrm{e}^{-rt}c^*_tX^*_t\,$ because we try to optimise purchasing power and it coincides with annual income up to some discretisation error. For simplicity, we take the expectation 
\[t\mapsto\mathbb{E}[\mathrm{e}^{-rt}c^*_tX^*_t]\quad\text{for $\,t\geq0$}\]
because volatility would obscure the picture. 

\begin{center}
    \includegraphics[trim=0 10 0 30,width=0.45\linewidth]{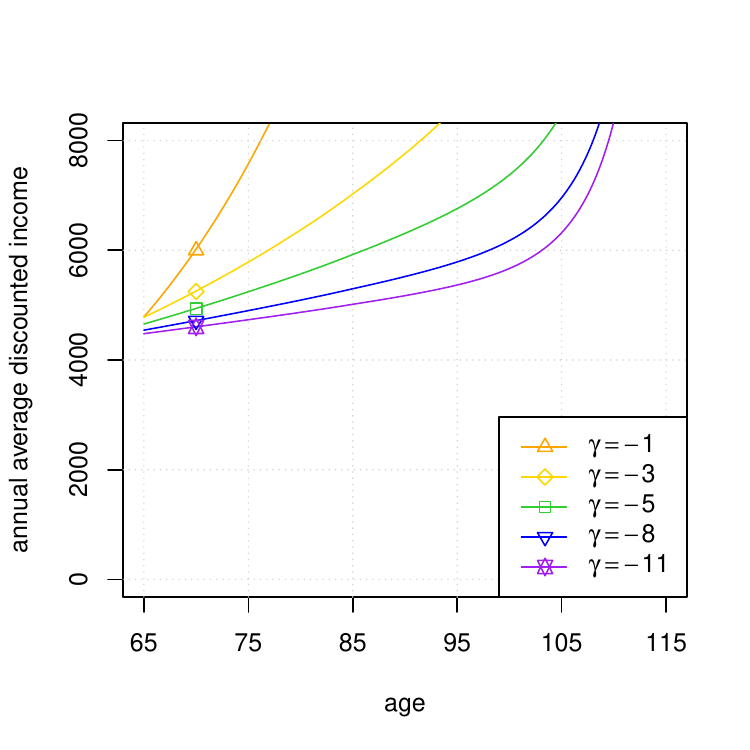}
    \captionof{figure}{Annual income of a person aged 65 with an initial investment of \pounds100k with preferences that yield an initial allocation of 0 in the tontine and continuously shift to 1 after 20 years.}
    \label{fig:income-lambda-scaled-and-trimmed} 
\end{center}

Figure \ref{fig:income-lambda-scaled-and-trimmed} shows the annual income based on the preferences (\ref{eq:power-preferences-scaled-and-trimmed}). The most important detail is the initial value between \pounds4480 to \pounds4786, which fits the annual income between \pounds4540 to \pounds4756 of available annuities escalating at retail price index on the market according to \citet{Retirementline2024}. In particular, our chosen preferences have kept consumption at reasonable levels. 

It is (actually) surprising that the income levels between our tontines and annuities match up so tightly. The numbers line up perfectly to balance out the bequest benefit of the tontines with the fees for annuities yielding equal income levels. That the preferences (\ref{eq:power-preferences-scaled-and-trimmed}) remove the bequest benefit after 20 years is essential here. If we preferred bequest during the whole lifetime like with (\ref{eq:power-preferences}), we would see similar shaped curves as in Figure \ref{fig:income-lambda-scaled-and-trimmed} but with a significantly lower initial consumption between \pounds2872 to \pounds3833 for $\,\gamma\leq-3$. Given that different retirement products compete on the market and people avoid trade-offs, we imagine that preferences tend to match the income levels of available retirement products. More precisely, we imagine that the time frame of bequest benefits, 20 years in (\ref{eq:power-preferences-scaled-and-trimmed}), would be used in practice to match the income of other retirement products. 

It is worth noting that $\,\mu>r\,$ is the reason for the monotone increase in consumption in Figure \ref{fig:income-lambda-scaled-and-trimmed}. We like to see constant consumption, but only specific combinations of $\gamma$ and $\rho$ yield that for all $\,\mu-r$. Those combinations are $\,\gamma=2/3\,$ or $\,\gamma=-\infty\,$ when $\,\rho=r\gamma$, i.e.\ 525\% or 0\% in equity respectively, which makes no sense. Most notably, we cannot find preferences $b$ that yield constant annual average discounted consumption because the precise condition depends only on externally given parameters and $\,\mu-r\neq0$. More precisely:

\begin{pro}
    In the context of Proposition \ref{pro:controls},
    \[\mathbb{E}[\mathrm{e}^{-rt}c^*_tX^*_t]=\frac{\mathrm{e}^{-\beta t}\mathrm{e}^{(\mu-r)\pi^*t}}{\int_0^\infty\mathrm{e}^{-\beta u}S_u(1+b_u^{\frac{1}{1-\gamma}}\lambda_u)\,\mathrm{d}u},\]
    which is constant if and only if $\,(\mu-r)\pi^*-\beta=0$.
\end{pro}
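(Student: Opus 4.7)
The plan is to exploit the fact that the candidate controls $\pi^*, c^*, \alpha^*$ in Proposition \ref{pro:controls} are all deterministic, with $\pi^*$ moreover constant. Hence $\mathbb{E}[\mathrm{e}^{-rt} c^*_t X^*_t] = \mathrm{e}^{-rt} c^*_t\,\mathbb{E}[X^*_t]$, and $\mathbb{E}[X^*_t]$ can be read off directly from the closed-form in (\ref{eq:X}). Because $\sigma\pi^*$ is a bounded deterministic constant, Novikov's condition holds trivially and the stochastic exponential in (\ref{eq:X}) is a true martingale with expectation $1$, so
\[\mathbb{E}[X^*_t] = X_0 \exp\Big(\int_0^t r + (\mu-r)\pi^* - c^*_u + \alpha^*_u \lambda_u \,\mathrm{d}u\Big).\]

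The next step is to evaluate this deterministic integral using the same logarithmic-derivative trick already exploited in (\ref{eq:-infty}). Writing $F(u) = \int_u^\infty \mathrm{e}^{-\beta s} S_s(1 + b_s^{\frac{1}{1-\gamma}}\lambda_s)\,\mathrm{d}s$, the formulas of Proposition \ref{pro:controls} give $c^*_u + (1-\alpha^*_u)\lambda_u = -F'(u)/F(u)$. Decomposing $-c^*_u + \alpha^*_u\lambda_u = -\bigl(c^*_u + (1-\alpha^*_u)\lambda_u\bigr) + \lambda_u$ and using $\int_0^t \lambda_u\,\mathrm{d}u = -\log S_t$, the integral telescopes to $\log(F(t)/F(0)) - \log S_t$.

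Plugging back in, $\mathbb{E}[X^*_t] = X_0\,\mathrm{e}^{rt}\mathrm{e}^{(\mu-r)\pi^* t}\,F(t)/(F(0)\,S_t)$. Multiplying by $\mathrm{e}^{-rt} c^*_t = \mathrm{e}^{-(r+\beta)t} S_t/F(t)$, the factors $S_t$ and $F(t)$ cancel, leaving the claimed expression (up to the factor $X_0$, which the statement suppresses by implicit normalisation). The second claim is then immediate: the $t$-dependence of the right-hand side is entirely contained in $\mathrm{e}^{((\mu-r)\pi^* - \beta)t}$, so the quantity is constant in $t$ if and only if $(\mu-r)\pi^* - \beta = 0$.

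There is no real obstacle here: every step either involves a purely deterministic manipulation or recycles an identity already established in the proof of Proposition \ref{pro:controls}. The one point that could in principle cause trouble is the martingale property of the stochastic exponential, but this is immediate from the constancy of $\sigma\pi^*$.
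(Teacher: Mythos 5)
Your proof is correct and follows essentially the same route as the paper: both read off $\mathbb{E}[X^*_t]$ from the closed form in (\ref{eq:X}) via the martingale property of the stochastic exponential with constant $\sigma\pi^*$, and both use the logarithmic-derivative identity from (\ref{eq:-infty}) to collapse the deterministic integral so that $S_t$ and $\int_t^\infty f_u\,\mathrm{d}u$ cancel. Your explicit remark that the stated formula suppresses the factor $X_0$ is a fair observation (the paper silently normalises it away) and does not affect the constancy criterion.
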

\begin{proof}
    Let
    \[f_t=\mathrm{e}^{-\beta t}S_t(1+b_t^{\frac{1}{1-\gamma}}\lambda_t)\quad\text{for $\,t\geq0$}.\]
    Similar to (\ref{eq:-infty}), combining (\ref{eq:X}) with the definitions of $\pi^*,c^*,\alpha^*$ from Proposition \ref{pro:controls} and noting that $\pi^*$ is constant, i.e.\ $\,\mathcal{E}_t(\int_0\sigma\pi_u^*\,\mathrm{d}W_u)\,$ is a martingale, yields
    \begin{align*}
        \mathbb{E}[\mathrm{e}^{-rt}c^*_tX^*_t]&=\frac{\mathrm{e}^{-\beta t}\mathrm{e}^{(\mu-r)\pi^*t}\exp(\int_0^t\frac{-f_u}{\int_u^\infty\!f_s\,\mathrm{d}s}\,\mathrm{d}u)}{\int_t^\infty\!f_u\,\mathrm{d}u}
        \\&=\frac{\mathrm{e}^{-\beta t}\mathrm{e}^{(\mu-r)\pi^*t}\exp\big(\log(\int_t^\infty\!f_u\,\mathrm{d}u)-\log(\int_0^\infty\!f_u\,\mathrm{d}u)\big)}{\int_t^\infty\!f_u\,\mathrm{d}u}=\frac{\mathrm{e}^{-\beta t}\mathrm{e}^{(\mu-r)\pi^*t}}{\int_0^\infty\!f_u\,\mathrm{d}u}.
    \end{align*}
\end{proof} 

%----------------------------------------------
\section{Conclusion}
%----------------------------------------------

We have studied the optimal allocation of funds in a modern tontine with bequest. We verified that the candidate allocation from a corresponding HJB approach is optimal, a mathematical issue raised by \citet[p.264]{Dagpunar2021} and left for future research. We have discussed why some aspects of the allocation violate common sense and have suggested solutions.  

We have used duality methods from stochastic optimal control theory to verify the candidate controls in our utility maximisation problem. Those tools are particularly well-suited when dealing with the power utility. We hope the paper raises awareness of those tools in the actuarial community. For example, we can verify the candidate controls in related publications like \citet{HeLiangRen2024} and \citet{NgNguyen2024} instead of assuming hard-to-check integrability constraints.

It is clear that investing in a tontine and then losing everything because of sudden death is an unacceptable circumstance, similar to why bequest is a reason for the annuity puzzle. But, nobody sharing their funds means a low income for everyone. As a result, we argued that the initial investment in the tontine should increase from 0 to 1 in a fixed time interval. We have found preferences that model those two constraints in a power utility set-up. Those preferences yield almost linear allocation in the tontine for sensible ranges of risk aversion, i.e.\ one of the simplest ways to allocate funds to the tontine is likely the most preferred and practical one.

%-----------------------
\emergencystretch1em % avoiding writing over the line, e.g. urls
\printbibliography %biblatex uses a different command to insert bibliography
%-----------------------

\end{document}